\documentclass[aps,prl,groupedaddress,amsmath,amssymb,twocolumn]{revtex4-2}

\usepackage{amsthm}
\usepackage{mathtools}
\usepackage{dsfont}
\usepackage{graphicx}
\usepackage{braket}
\usepackage{booktabs}
\usepackage[ruled,vlined,linesnumbered]{algorithm2e}
\usepackage{optidef}
\usepackage[caption=false]{subfig}
\usepackage{hyperref} 
\hypersetup{
colorlinks=true,
linkcolor=blue,
urlcolor=blue,
citecolor=blue,
}
\usepackage[capitalise]{cleveref}
\usepackage{multirow}
\usepackage{xcolor}
\definecolor{ForestGreen}{rgb}{0.13,0.55,0.13}

\newtheorem{lemma}{Lemma}

\newtheorem{definition}{Definition}

\newcommand{\inner}[2]{\langle #1, #2 \rangle}

\def\tr {\mathrm{Tr}}

\allowdisplaybreaks[2]

\newcommand{\beginsupplement}{%
    \setcounter{section}{0}%
    \renewcommand{\thesection}{S\arabic{section}}%
    \setcounter{equation}{0}%
    \renewcommand{\theequation}{S\arabic{equation}}%
    \setcounter{figure}{0}%
    \renewcommand{\thefigure}{S\arabic{figure}}%
    \setcounter{table}{0}%
    \renewcommand{\thetable}{S\arabic{table}}%
}

\begin{document}


\title{
Geometric Optimization over Quantum State Spaces: Tight Uncertainty Relations and Resource Certification
}


\author{Ma-Cheng Yang$^1$ and Cong-Feng Qiao$^{1,2}$}
\email[]{qiaocf@ucas.ac.cn}
\affiliation{$^1$School of Physical Sciences, University of Chinese Academy of Sciences, 1 Yanqihu East Rd, Beijing 101408, China \\ $^2$ICTP-AP, University of Chinese Academy of Sciences, Beijing 100190, China}


\date{\today}

\begin{abstract}
Determining the fundamental limits of nonlinear functionals of quantum measurement statistics is a crucial yet generally intractable non-convex optimization problem. We introduce a generic support-function-based outer-approximation framework for solving concave-minimization (or convex-maximization) problems over the quantum state space. By mapping the problem onto a reduced $\mathcal{Z}$-space, we characterize the exact quantum boundary through supporting half-spaces derived from the largest eigenvalues of effective observables. This yields an effective method that produces tight bounds for general measurements in finite-dimensional quantum systems with preassigned numerical precision. As an initial application, we recover the exact variance-based uncertainty relations of \href{https://doi.org/10.1103/PhysRevLett.119.170404}{[PRL \textbf{119}, 170404 (2017)]} and efficiently compute optimal entropic uncertainty relations (EURs). Our results reveal that standard analytical and majorization-based EUR bounds are fundamentally loose for generic measurements, and we show that the resulting exact bounds directly enhance quantum steering detection under asymmetric settings. We further apply the framework to determine the maximal athermality resource certifiable from a restricted measurement scenario. Our method thus provides a universal computational tool for exploring the boundaries of quantum state space and the limits of quantum resources.
\end{abstract}


\maketitle


\noindent
\emph{Introduction---}Determining the fundamental limits of physical observables or measurement statistics over all quantum states is a central problem in quantum mechanics and quantum information theory. Formally, a wide array of problems in quantum theory---ranging from formulating state-independent uncertainty relations (URs) and various nonlinear functionals constructed from moments, to exploring limits in quantum thermodynamics---can be cast as a single optimization task: optimizing a real-valued nonlinear functional of expectation values over the entire quantum state space. This amounts to finding the exact bound
\begin{align}
f(\mathcal{A})
=\inf_{\rho}
F\bigl(
\braket{A_1}_{\rho},\cdots,\braket{A_m}_{\rho}
\bigr) \; ,
\label{eq:general_opt_intro}
\end{align}
where $\mathcal{A}=\{A_1,\cdots,A_m\}$ is a finite set of Hermitian operators, $F$ is a real-valued continuous functional and the optimization is performed for all density operators $\rho$ on the underlying Hilbert space.

Physically, the functional $F$ is often related to a resource measure, such as information entropies, variances, or the Kullback-Leibler divergence, which are typically concave or convex, while the feasible set of quantum states is convex. According to convex optimization theory \cite{boyd2004convex}, convex optimization problems---minimizing a convex function over a convex set, where local optimality implies global optimality---can be solved efficiently. By contrast, concave minimization (or equivalently convex maximization) is generally NP-hard \cite{horst96}. Its optimum is typically attained on the boundary of the feasible set, whose geometry is complicated and often inaccessible in high-dimensional quantum systems.

State-independent URs provide a paradigmatic example of this difficulty.
As first formalized by Deutsch \cite{deutsch83}, a state-independent UR  
\begin{align}
\mathcal{U}(\mathcal{A},\mathcal{B};\rho) \geq q(\mathcal{A},\mathcal{B}) 
\label{eq:general_ur}
\end{align}
seeks a nontrivial, purely measurement-dependent lower bound $q(\mathcal{A},\mathcal{B}) = \inf_\rho \mathcal{U}(\mathcal{A},\mathcal{B};\rho)$ for an uncertainty measure $\mathcal{U}$, common examples being the sum of entropies or variances. It is worth noting that the theory of majorization lattice also provides a nice tool for formulating state-independent URs by directly analyzing the probability vectors of measurement outcomes \cite{partovi11,friedland13,li19}. While landmark analytical results such as the Maassen-Uffink relations \cite{maassen88} have provided foundational bounds, they are typically tight only for mutually unbiased bases. Significant efforts have since been devoted to tightening the Maassen-Uffink bound \cite{garrett90,sanches-ruiz98,ghirardi03,de08-dhjjc,puchala13,coles14_improved,xiao16,huang21,xie21,rastegin24,huang25}, and to generalizing it to multi-measurement \cite{sanchez93,sanchez-ruiz95,wu09,rudnicki14,liu15,zhang15,xiao16_strong,riccardi17_tight,huang24} and multipartite \cite{renes09,berta10,coles12,ming20,wu22,krawec23,zhang23} scenarios. Nevertheless, for general, asymmetric, or multi-measurement settings, obtaining the exact optimal bound remains an elusive problem \cite{wehner10}, and analytical approximations consistently fail to capture the true geometric boundary of the quantum state space.

In this work, motivated by a variance-specific algorithm introduced by Schwonnek \emph{et al.} \cite{schwonnek17}, we develop a generic support-function-based outer approximation method for solving such nonlinear optimization problems (concave minimization or convex maximization) over the quantum state space. By exploiting an affine reduction to a lower-dimensional $\mathcal{Z}$-space and characterizing the feasible region through its dual representation, we obtain an iterative cutting-plane algorithm. Since each supporting half-space follows from the largest eigenvalue of an effective observable, the algorithm encloses the true quantum boundary within a sequence of shrinking polytopes, yielding certified lower and upper bounds whose gap quantifies the numerical precision. Crucially, this outer-approximation strategy avoids the local-minima issues commonly encountered by search algorithms operating directly over the quantum state space. As an initial demonstration, we apply the method to the sum of variances and recover the result of Ref.~\cite{schwonnek17}. By reformulating tight entropic uncertainty relations (EURs) as the minimization of an entropy functional associated with a single effective positive operator-valued measures (POVM) over the quantum probability space, we show that the proposed method yields tight EUR bounds, substantially improving upon standard analytical and majorization-based bounds. We further demonstrate that these tighter entropic bounds directly enhance steering detection, especially under experimentally relevant asymmetric measurement settings. Finally, we present a quantum thermodynamic application to certifying the maximum accessible athermality resource extractable from a restricted measurement apparatus.


\emph{$\mathcal{Z}$-space representation of the optimization problems---}Any quantum state $\rho$ on a $d$-dimensional Hilbert space $\mathcal{H}$ can be expanded in terms of the identity and the traceless Hermitian generators of $SU(d)$ as
\begin{align}
\rho = \frac{1}{d}\mathds{1} + \frac{1}{2}\mathbf{r}\cdot\boldsymbol{\pi} \; ,
\label{eq:bloch_rep}
\end{align}
where $\boldsymbol{\pi}=(\pi_1,\cdots,\pi_{d^2-1})$ denotes a set of generators normalized as $\tr[\pi_{\mu}\pi_{\nu}]=2\delta_{\mu\nu}$, and $r_{\mu}=\tr[\rho\pi_{\mu}]$. The Bloch representation in \cref{eq:bloch_rep} establishes a one-to-one correspondence between the quantum state space $\mathcal{D}(\mathcal{H})$ and a subset of $\mathbb{R}^{d^2-1}$, known as the Bloch space $\mathcal{B}(\boldsymbol{\pi})$. 

A natural generalization of the Bloch space is obtained by replacing the generator set $\boldsymbol{\pi}$ with an arbitrary set of Hermitian operators $\mathcal{A}=\{A_1,\cdots,A_m\}$ on $\mathcal{H}$. Whereupon, one can define the measurement-induced Bloch space (MIBS) with respect to $\mathcal{A}$ \cite{yang25}
\begin{align}
\mathcal{C}(\mathcal{A}) := \{\boldsymbol{\alpha}_{\mathcal{A}}(\rho)\;|\;\rho\in\mathcal{D}(\mathcal{H})\} \; ,
\end{align}
where $\boldsymbol{\alpha}_{\mathcal{A}}(\rho) = (\braket{A_1}_\rho,\cdots,\braket{A_m}_\rho)^{\top}$ is a real vector in $\mathbb{R}^m$ consisting of the expectation values of the operators in $\mathcal{A}$ for the state $\rho$. The set $\mathcal{C}(\mathcal{A})$ is also known as the joint numerical range (JNR) \cite{szymanski18,szyma20} or quantum convex support \cite{weis11} of Hermitian operators $A_1,\cdots,A_m$, and has been widely used in studies of uncertainty relations, separability, and experimental exploration of high-dimensional quantum state space \cite{schwonnek17,yang25,xie20}.

For convenience, we hereafter drop the subscript $\mathcal{A}$ in $\boldsymbol{\alpha}_{\mathcal{A}}(\rho)$ and $\mathcal{C}(\mathcal{A})$ when there is no confusion. Thus the general optimization problems \cref{eq:general_opt_intro} can be reformulated as following optimization problems over the MIBS:
\begin{align}
f(\mathcal{A}) = \inf_{\boldsymbol{\alpha}\in\mathcal{C}}F(\boldsymbol{\alpha}) \; .
\label{eq:f_opt_mibs}
\end{align}
The expectation value vector $\boldsymbol{\alpha}(\rho)$ is related to the Bloch vector $\mathbf{r}$ via the affine mapping
\begin{align}
\boldsymbol{\alpha}(\rho) = \boldsymbol{\alpha}(\mathbf{r}) = \mathbf{s} + M\mathbf{r} \; ,
\label{eq:affine_map}
\end{align}
where $\mathbf{s}=(\tr[A_1],\cdots,\tr[A_m])^{\top}/d$ represents the expectation value vector for the maximally mixed state, and $M$ is a real matrix defined by the entries $M_{\mu\nu}=\tr[A_{\mu}\pi_{\nu}]/2$. Consequently, the set $\mathcal{C}(\mathcal{A})$ of accessible expectation value vectors is an affine image of the Bloch space $\mathcal{B}(\boldsymbol{\pi})$. It forms a compact, convex subset of an ellipsoid in $\mathbb{R}^m$ centered at $\mathbf{s}$, with semi-axis lengths $\sigma_i(M)\sqrt{2(d-1)/d}$, where $\sigma_i(M)$ denotes the $i$-th singular value of $M$. Using the singular value decomposition (SVD) of $M$, we can express this relationship as
\begin{align}
\boldsymbol{\alpha}(\mathbf{r}) &= \mathbf{s} + U\Sigma V^\top\mathbf{r} \notag \\
&= \mathbf{s} + Q\mathbf{z}  \notag \\
&= \boldsymbol{\alpha}(\mathbf{z}) \; ,
\end{align}
where $Q$ consists of the first $r=\mathrm{rank}(M)$ columns of $U$, and we define the reduced variable $\mathbf{z}=(\Sigma_rV_r^\top)\mathbf{r}\in\mathbb{R}^r$, with $\Sigma_r$ being the diagonal matrix of non-zero singular values and $V_r^\top$ containing the first $r$ rows of $V^\top$. The optimization is thus recast in terms of $\mathbf{z}$ over the feasible set $\mathcal{Z} = \{ \mathbf{z} \in \mathbb{R}^r \mid \mathbf{s} + Q\mathbf{z}\in \mathcal{C}(\mathcal{A}) \}$. We can therefore reformulate the optimization problem \cref{eq:f_opt_mibs} as
\begin{align}
f(\mathcal{A}) = \inf_{\mathbf{z}\in\mathcal{Z}}F(\boldsymbol{\alpha}(\mathbf{z})) \; .
\end{align}
The feasible set $\mathcal{Z}$-space inherits compactness and convexity from the quantum state space. Notably, this reduction from $\mathcal{C}(\mathcal{A})$ to $\mathcal{Z}$-space significantly reduces the dimension of the optimization problem, which is particularly advantageous in scenarios involving a large number of measurement number $m$. However, although the dimension is reduced, the boundary of $\mathcal{Z}$ remains implicit. To tackle this, we turn to the dual description of convex sets.


\emph{Dual representation and boundary characterization ---}A fundamental property of closed convex sets is that they are uniquely determined by their dual representation, i.e. the intersection of all closed half-spaces containing them. This geometric insight is rigorously captured by the support function \cite{rockafellar15}.
\begin{definition}[Support Function]
The support function of a convex set $\mathcal{C} \subset \mathbb{R}^n$ in the direction $\mathbf{u} \in \mathbb{R}^n$ is defined as:
\begin{align}
\sigma_{\mathcal{C}}(\mathbf{u}) := \sup_{\mathbf{x} \in \mathcal{C}} \inner{\mathbf{u}}{\mathbf{x}} \; .
\end{align}
\end{definition}

Crucially, for the set $\mathcal{C}(\mathcal{A})$ of quantum expectation value vectors, the support function admits an efficient computation. The maximization over MIBS $\mathcal{C}(\mathcal{A})$ translates to maximizing the expectation value of an observable over the set of density matrices:
\begin{align}
\sigma_{\mathcal{C}}(\mathbf{u}) &= \sup_{\boldsymbol{\alpha}\in\mathcal{C}(\mathcal{A})}\inner{\mathbf{u}}{\boldsymbol{\alpha}} \notag \\ 
&= \sup_\rho \tr\left[\left(\sum_{i=1}^{m}u_iA_i\right)\rho\right] \notag \\
&= \lambda_{\max}\left(\Omega(\mathbf{u})\right) \; ,
\end{align}
where $\Omega(\mathbf{u}) \equiv \sum_{i=1}^{m}u_iA_i$ is the effective observable constructed from the measurement operators, and $\lambda_{\max}(\cdot)$ denotes the maximal eigenvalue. 

Since a closed convex set is completely characterized by its support function, we can express $\mathcal{C}$ as the intersection of infinitely many half-spaces:
\begin{align}
\mathcal{C} = \left\{\boldsymbol{\alpha}\in\mathbb{R}^m \big| \inner{\mathbf{u}}{\boldsymbol{\alpha}} \leq \sigma_{\mathcal{C}}(\mathbf{u}), \;\forall \mathbf{u}\in \mathbb{R}^m \right\} \; .
\end{align}
Substituting $\boldsymbol{\alpha} = \mathbf{s} + Q\mathbf{z}$ into this description yields the dual representation of the reduced feasible set $\mathcal{Z}$:
\begin{align}
\mathcal{Z} = \left\{\mathbf{z}\in\mathbb{R}^r \big |\inner{\mathbf{u}}{Q\mathbf{z}} \leq \sigma_{\mathcal{C}}(\mathbf{u}) - \inner{\mathbf{u}}{\mathbf{s}}, \;\forall \mathbf{u}\in \mathbb{R}^m \right\} \; .
\label{eq:Z_dual_rep}
\end{align}
Equation (\ref{eq:Z_dual_rep}) reveals that $\mathcal{Z}$ is defined by an intersection of infinitely many linear constraints. This structural insight suggests that we can approximate $\mathcal{Z}$ using a finite subset of these half-spaces, thereby enclosing $\mathcal{Z}$ within a sequence of convex polytopes. This strategy forms the cornerstone of the outer-approximation algorithm, which we detail below.


\emph{Outer-approximation algorithm of concave minimization ---}We now consider an important class of non-convex optimization problems, namely concave minimization, where $F$ is a concave function \footnote{Note that convex maximization is equivalent to concave minimization upon negating the objective function.}. The core idea is to approximate the feasible set $\mathcal{Z}$ by a sequence of shrinking polytopes $\mathcal{C}_0 \supset \mathcal{C}_1 \supset \dots \supset \mathcal{Z}$. At each iteration, the algorithm minimizes the objective function over the current outer polytope and adds a cutting plane to refine the approximation.

To initialize the procedure, we construct a primary polytope $\mathcal{C}_0$. 
An effective initialization involves box constraints derived from the spectral limits of the measurement operators: $\lambda_{\min} (A_i)\le \alpha_i\le \lambda_{\max}(A_i)$. In terms of the variable $\mathbf{z}$, this corresponds to setting directions $\mathbf{u}=\pm\mathbf{e}_i$ with $\mathbf{e}_i$ being the $i$-th standard basis vector, yielding:
\begin{align}
\mathcal{C}_0 = \left\{\mathbf{z}\in\mathbb{R}^r \;\big|\; \pm(Q\mathbf{z})_i \le \lambda_{\max}(\pm A_i) \mp s_i \right\} \; .
\end{align}
Note that the initial approximation can be further tightened by including pairwise constraints, such as $\alpha_i + \alpha_j \le \lambda_{\max}(A_i + A_j)$.

\textbf{Lower bound.}
For any outer polytope $\mathcal{C}_k$, we seek to minimize the concave objective function $F(\boldsymbol{\alpha}(\mathbf{z}))$,whose minimum over a convex polytope is necessarily attained at one of the vertices according to the theory of convex analysis. This allows us to define a lower bound for the minimal value of $F$ over the feasible set $\mathcal{Z}$:
\begin{align}
f(\mathcal{A}) \geq \min_{\mathbf{z}\in\mathcal{V}_k} F(\boldsymbol{\alpha}(\mathbf{z})) = F(\boldsymbol{\alpha}(\mathbf{z}^*)) =: f_{-}(\mathcal{A}) \; ,
\end{align}
where $\mathcal{V}_k$ denotes the set of vertices of the polytope $\mathcal{C}_k$, and $\mathbf{z}^*=\operatorname*{arg\,min}_{\mathbf{z} \in \mathcal{V}_k} F(\boldsymbol{\alpha}(\mathbf{z}))$ is the vertex that minimizes the objective function. 
\begin{algorithm}[!ht]
\caption{\label{algor:outer_appr_entropy}Support-Function-Based Outer Approximation}
\SetAlgoLined
\KwIn{Hermitian operators $\{A_i\}$, tolerance $\epsilon$}
\KwOut{Bounds $[f_{-}, f_{+}]$}

\textbf{Initialize:} Construct affine basis $(\mathbf{s}, Q)$\;
\textbf{Initial Polytope:} Build $\mathcal{C}_0$ with Box/Pair constraints\;
Set $k=0$, $\text{Gap} = \infty$\;

\While{$\text{Gap} > \epsilon$}{
\tcc{Step 1: Lower Bound via Vertices}
Compute vertices $\mathcal{V}_k$ of $\mathcal{C}_k$\;
$\mathbf{z}^* \leftarrow \operatorname*{arg\,min}_{\mathbf{z} \in \mathcal{V}_k} F(\mathbf{s} + Q\mathbf{z})$\;
$\mathbf{p}_{\text{poly}} \leftarrow \mathbf{s} + Q\mathbf{z}^*$, $f_{-} \leftarrow F(\boldsymbol{\alpha}_{\text{poly}})$\;

\tcc{Step 2: Upper Bound via Spectral Decomposition}
$\mathbf{g} \leftarrow \nabla F(\boldsymbol{\alpha}_{\text{poly}})$\;
Compute $\psi_{\min}$ as ground state of $\Omega(\mathbf{g})=\sum_i g_i A_i$\;
$\boldsymbol{\alpha}_{\text{real}} \leftarrow \boldsymbol{\alpha}(\ket{\psi_{\min}})$\;
$f_{+} \leftarrow F(\boldsymbol{\alpha}_{\text{real}})$\;

\tcc{Step 3: Cut Generation}
$\text{Gap} \leftarrow f_+ - f_-$\;
\If{$\text{Gap} \le \epsilon$}{
\textbf{break}\;
}
$f_{\text{val}} \leftarrow \lambda_{\max}(\sum_i -g_i A_i)$\;
Add cut: $-(\mathbf{g}^{\top} Q) \mathbf{z} \le f_{\text{val}} + \mathbf{g}^{\top} \mathbf{s}$\;
$k \leftarrow k+1$\;
}
\Return $[f_-, f_+]$\;
\end{algorithm}

\textbf{Upper bound.}
The optimal vertex $\mathbf{z}^*$ of the outer approximation may not physically correspond to a valid quantum state (i.e., $\mathbf{z}^* \notin \mathcal{Z}$). To establish a valid upper bound, we map $\mathbf{z}^*$ to a feasible state inside $\mathcal{Z}$. We define the gradient vector $\mathbf{g} = \nabla F(\boldsymbol{\alpha})|_{\boldsymbol{\alpha}=\boldsymbol{\alpha}(\mathbf{z}^*)}$ and consider the effective Hamiltonian $\Omega(\mathbf{g}) = \sum_i g_i A_i$. Let $\ket{\psi_{\min}}$ be the ground state of $\Omega(\mathbf{g})$. The expectation values vector arising from this physical state is $\boldsymbol{\alpha}_{\text{real}} = \boldsymbol{\alpha}(\ket{\psi_{\min}})$, which provides a valid upper bound:
\begin{align}
f(\mathcal{A}) \leq F(\boldsymbol{\alpha}_{\text{real}}) =: f_{+}(\mathcal{A}) \; .
\end{align}
The gap between these bounds, $\epsilon = f_{+}(\mathcal{A}) - f_{-}(\mathcal{A})$, quantifies the precision of our estimation.

\textbf{Refinement via cutting planes.}
We generate a new linear constraint (a cutting plane) based on the support function in the direction of steepest descent $-\mathbf{g}$. The new half-space is given by $\inner{-\mathbf{g}}{Q\mathbf{z}} \leq \sigma_{\mathcal{C}}(-\mathbf{g}) + \inner{\mathbf{g}}{\mathbf{s}}$. The updated polytope is then defined as:
\begin{align}
\mathcal{C}_{k+1} = \mathcal{C}_k \cap \left\{\mathbf{z}\in\mathbb{R}^r \big| \inner{-\mathbf{g}}{Q\mathbf{z}} \leq \sigma_{\mathcal{C}}(-\mathbf{g}) + \inner{\mathbf{g}}{\mathbf{s}} \right\} \; .
\end{align}
By construction, this new constraint separates $\mathbf{z}^*$ from $\mathcal{Z}$ while retaining all valid quantum states. The algorithm iterates this procedure: calculating vertices, estimating bounds, and adding cuts, until $\epsilon$ falls below a desired threshold. The complete procedure is summarized in \cref{algor:outer_appr_entropy}. To validate the method, we benchmark its performance on the Shannon entropy using Haar-random POVMs with $m=4$ outcomes acting on a Hilbert space of dimension $d=100$ (generated via \href{https://heyredhat.github.io/qbism/05random.html}{\texttt{qbism.random\_haar\_povm}}). As illustrated in \cref{fig:vertex_track_random_povm}, the algorithm successfully bounds the entropy with a precision of $\sim 10^{-8}$ after 100 iterations. 
\begin{figure}
\centering
\includegraphics[width=0.5\textwidth]{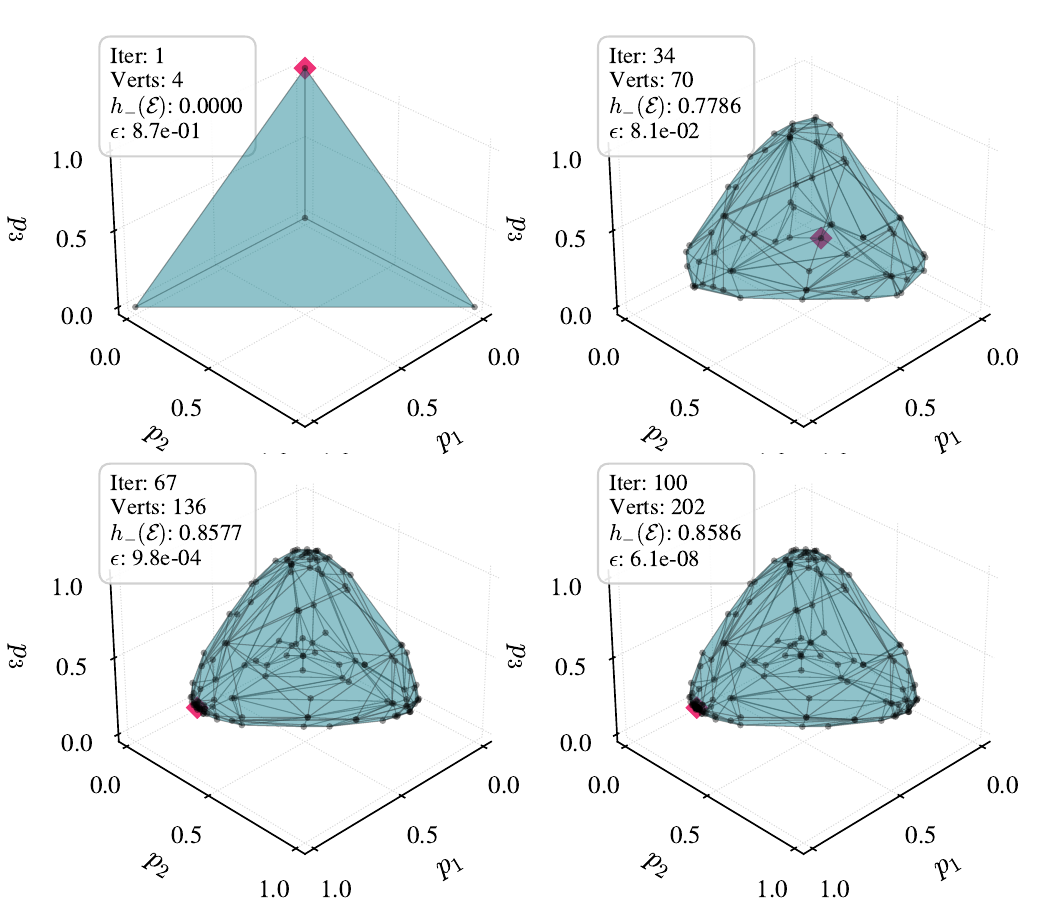}
\caption{Visualization of the convergence of the outer-approximating polytope. The figure illustrates the optimization of Shannon entropy of Haar-random POVM with $m=4$ outcomes and Hilbert space of dimension $d=100$ (generated via \href{https://heyredhat.github.io/qbism/05random.html}{\texttt{qbism.random\_haar\_povm}}). The red square denotes the optimal vertex.}
\label{fig:vertex_track_random_povm}
\end{figure}

As a warm-up example, we apply the above methodology to the variance-based UR for observables $A_1, A_2, \cdots, A_m$, obtaining
\begin{align}
F(\braket{A_i}) &= \sum_{i=1}^{m}\Delta_\rho^2(A_i)=\braket{A_0}_\rho - \sum_{i=1}^{m}\braket{A_i}_\rho^2 \notag \\ 
&= \alpha_0 - \sum_{i=1}^m \alpha_i^2 \notag \\ 
&= F(\boldsymbol{\alpha}) \; ,
\end{align}
where $A_0 = \sum_{i=1}^m A_i^2$. Clearly, $F(\boldsymbol{\alpha})$ is a concave function of $\boldsymbol{\alpha}$, and can thus be cast as a special case of our general framework with $\mathcal{A}=\{A_0, A_1, \cdots, A_m\}$ and
\begin{align}
f(\mathcal{A}) = \inf_{\mathbf{z}\in\mathcal{Z}} \left( \alpha_0(\mathbf{z}) - \sum_{i=1}^m \alpha_i(\mathbf{z})^2 \right) \; ,
\end{align}
which can be solved efficiently by the proposed algorithm, recovering the optimal variance-based URs in Ref.~\cite{schwonnek17}.


\emph{Application and discussion---}We now present several illustrative applications involving entropic uncertainty relations, quantum steering detection, and quantum thermodynamics, demonstrating the versatility of our methodology in computing tight bounds for various nonlinear functionals of quantum measurement statistics.

\textbf{Tight entropic uncertainty relations.} Consider a set of $N$ POVMs, denoted $\mathcal{A}_1, \mathcal{A}_2, \cdots, \mathcal{A}_N$, with $m_1, m_2, \cdots, m_N$ outcomes respectively. The Shannon EUR for these POVMs can be expressed as (see Supplemental Material for details and generalization to other entropies):
\begin{align}
\sum_{i=1}^{N}H\left(\mathbf{p}_{\mathcal{A}_i}(\rho)\right) &\geq Nh(\mathcal{E}) - N\ln N \; ,
\end{align}
where $h(\mathcal{E})= \inf_{\rho}H(\mathbf{p}_{\mathcal{E}}(\rho))$, and $\mathcal{E}=\{E_{\mu}\}_{\mu=1}^{m}$ is a single effective POVM with $m=\sum_{i=1}^N m_i$ outcomes, obtained by concatenating the $N$ original POVMs and rescaling each element by $1/N$. This demonstrates that computing tight EURs effectively reduces to determining the minimal entropy of a single POVM. For POVM $\mathcal{E}$, MIBS reduces to the quantum probability space (QPS):
\begin{align}
\mathcal{P}(\mathcal{E}) := \{\mathbf{p}_{\mathcal{E}}(\rho)\;|\;\rho\in\mathcal{D}(\mathcal{H})\} \; ,
\end{align}
which constitutes a subset of the probability simplex $\Delta^m$. Following the same procedure, we can reformulate the optimization problem for the EUR as
\begin{align}
h(\mathcal{E}) = \inf_{\mathbf{z}\in\mathcal{Z}} H(\mathbf{p}(\mathbf{z})) \; ,
\end{align} 
where the reduced $\mathcal{Z}$-space is $\mathcal{Z} = \{ \mathbf{z} \in \mathbb{R}^r \mid \mathbf{s} + Q\mathbf{z}\in \mathcal{P}(\mathcal{E}) \}$. Since the Shannon entropy is concave, this is precisely the concave-minimization problem identified above.

\begin{figure*}[ht]
\subfloat[Two-measurement setting $\mathcal{M}_2$. ]{\label{fig:qutrit_two_measurement}\includegraphics[width=0.5\textwidth]{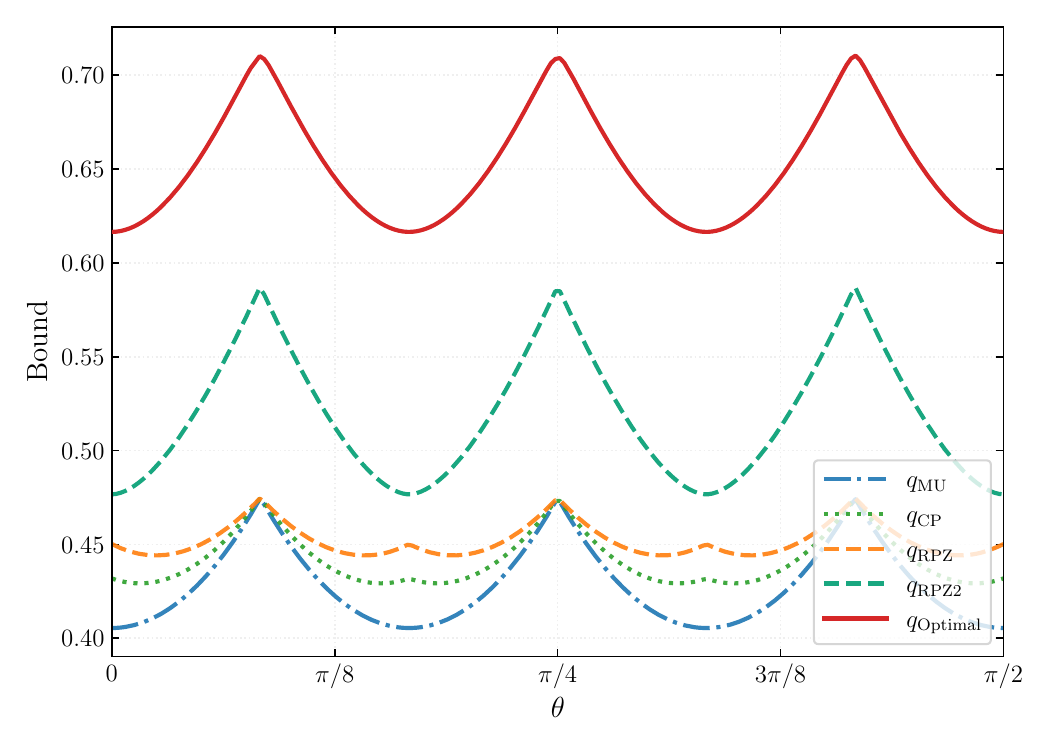}}
\subfloat[Three-measurement setting $\mathcal{M}_3$. ]{\label{fig:qutrit_three_measurement}\includegraphics[width=0.5\textwidth]{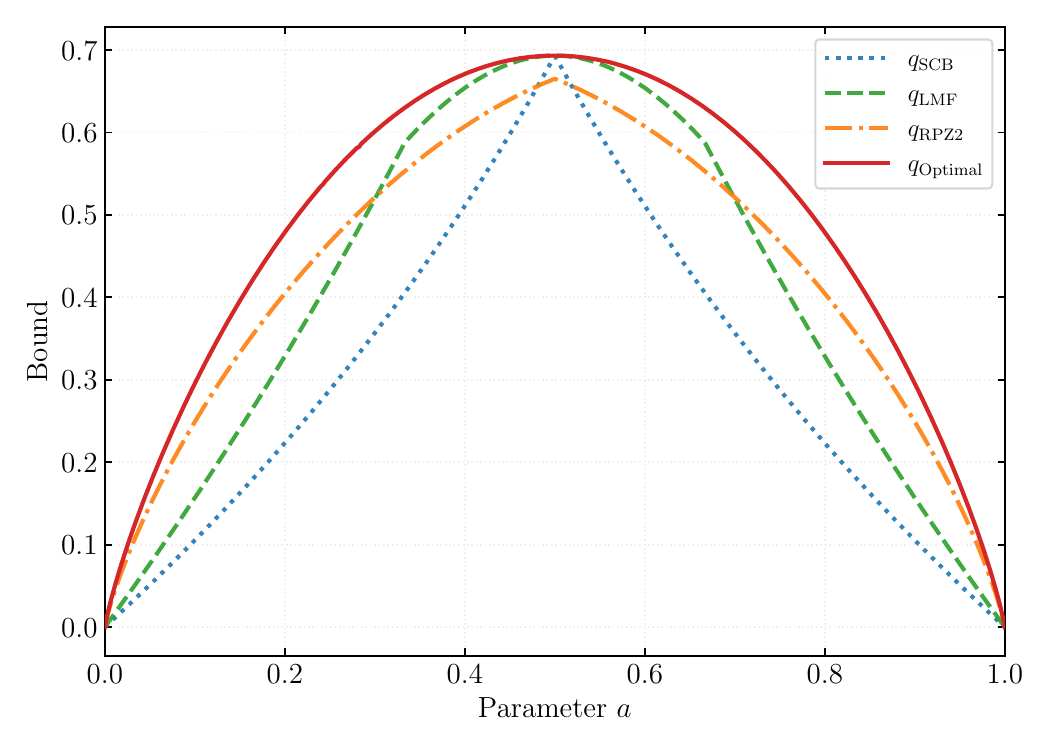}}
\caption{Comparison of entropic uncertainty bounds for two- and three-measurement settings. The proposed algorithm (red line) yields the tightest possible bound allowed by quantum mechanics, showing a substantially gap between the optimal bound and existing analytical and majorization-based bounds.
}
\label{fig:qutrit_steering_general}
\end{figure*}

Next, we compare the optimal EUR bounds obtained from our method with existing analytical and majorization-based bounds for two- and three-measurement settings. Typical analytical bounds for two measurement settings include the Maassen-Uffink bound $q_{\mathrm{MU}}=-\log c$ and the tighter Coles-Piani (CP) and Rudnicki-Puchała-Życzkowski (RPZ) bounds \cite{coles14_improved,rudnicki14}:
\begin{align}
q_{\mathrm{CP}} &:= \log \frac{1}{c}+\frac{1}{2}(1-\sqrt{c}) \log \frac{c}{c_2} \; , \\ 
q_{\mathrm{RPZ}} &:= \log \frac{1}{c}-\log \left(b^2+\frac{c_2}{c}\left(1-b^2\right)\right) \; ,
\end{align}
where $b=(1+\sqrt{c})/2$ and $c$ and $c_2$ are the largest and second-largest overlaps between the two measurement bases, respectively. For multiple ($N>2$) measurement settings, we consider the majorization-based bound $q_{\mathrm{RPZ2}}$ \cite{rudnicki14}, the simply constructed bounds $q_{\mathrm{SCB}}$ and the analytical bound $q_{\mathrm{LMF}}$ derived in Ref. \cite{liu15}. 

In \cref{fig:qutrit_two_measurement,fig:qutrit_three_measurement}, we consider a qutrit system ($d=3$) under two measurement settings $\mathcal{M}_2$ and three measurement settings $\mathcal{M}_3$, respectively, with the corresponding bases defined as:
\begin{small}
\begin{align}
&\mathcal{M}_2: \begin{cases}
\{(1,0,0),(0,\cos\theta,-\sin\theta),(0,\sin\theta,\cos\theta)\} \; , \\ 
\frac{1}{\sqrt{6}}\{(\sqrt{2},\sqrt{3},1),(\sqrt{2},0,-2),(\sqrt{2},-\sqrt{3},1)\} \; , \notag
\end{cases} \\
&\mathcal{M}_3: \begin{cases}
\{(1,0,0),(0,1,0),(0,0,1)\} \; , \\
\{(1 / \sqrt{2}, 0,-1 / \sqrt{2}),(0,1,0),(1 / \sqrt{2}, 0,1 / \sqrt{2})\} \; , \\
\{\left(\sqrt{a}, e^{i \phi} \sqrt{1-a}, 0\right),\left(\sqrt{1-a},-e^{i \phi} \sqrt{a}, 0\right),(0,0,1)\} \; . \notag 
\end{cases}
\end{align}
\end{small}As shown in figures, across almost the entire parameter range, the analytical and majorization-based bounds fail to capture the optimal uncertainty limit. This discrepancy reveals that, for generic measurement settings, relying solely on the maximal overlaps and the second-largest $c_2$, and even majorization technique that incorporate additional overlaps, are insufficient to characterize the complex boundary of the QPS. In contrast, our method (red line) precisely traces the optimal boundary.

\textbf{Steering detection with asymmetric settings.} Tighter EURs are not merely of theoretical interest; they have direct implications for detecting quantum correlations \cite{coles17,uola20}. As shown in Ref. \cite{costa18}, for a state to be non-steerable (i.e., admitting a local hidden state model), it must satisfy the inequality:
\begin{align}
\frac{1}{\alpha-1}\left[\sum_{k=1}^N\left(1-\sum_{i,j} \frac{\left(p_{i j}^{(k)}\right)^{\alpha}}{\left(p_i^{(k)}\right)^{\alpha-1}}\right)\right] \geq q_{\alpha}^{T}(\mathcal{A}_1,\cdots,\mathcal{A}_N) \; , \notag
\end{align}
whose right-hand side is precisely the Tsallis EUR bound. For isotropic states \cite{horodecki99} with $\alpha=2$, the non-steerability condition simplifies to a visibility threshold $\eta$:
\begin{align}
\eta \leq \sqrt{1-\frac{d \cdot q_{2}^{T}(\mathcal{A}_1,\cdots,\mathcal{A}_N)}{N(d-1)}} \; .
\end{align}
Violation of this inequality, i.e., $\eta$ exceeding the threshold, certifies steerability.
\begin{figure}[t]
\centering
\includegraphics[width=0.5\textwidth]{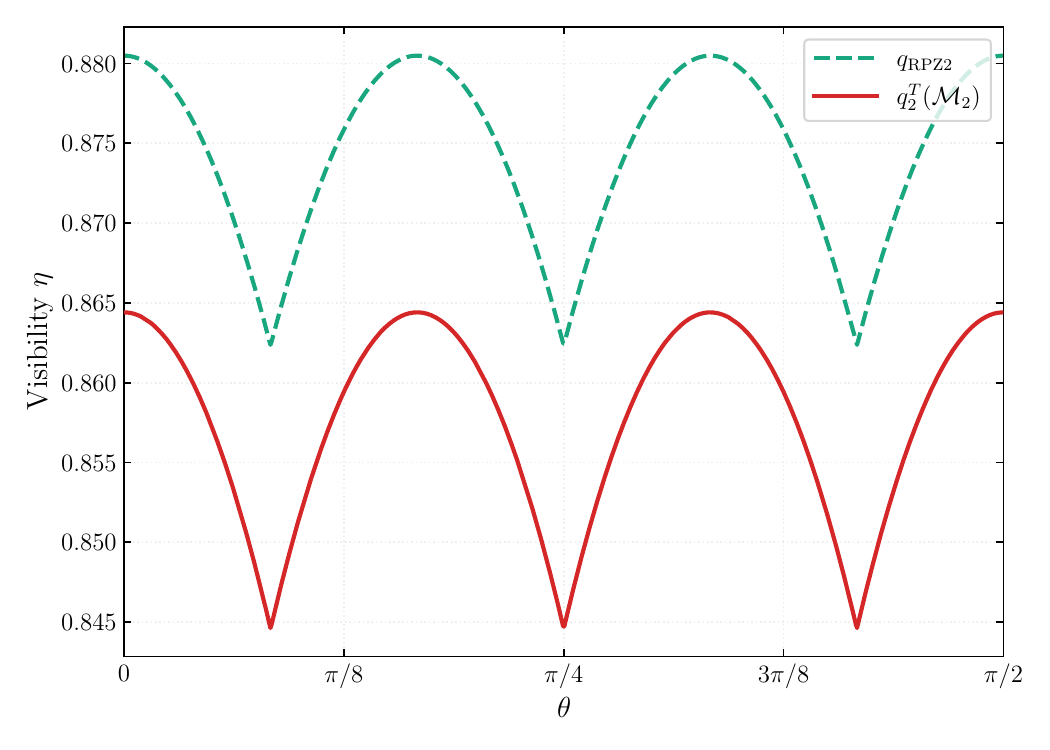}
\caption{Steering detection thresholds for isotropic states with measurement setting $\mathcal{M}_2$. A lower threshold indicates stronger noise robustness in detecting steerability.}
\label{fig:qutrit_steering_two_measurement}
\end{figure}

Our method significantly extends the utility of this steering criterion, particularly for asymmetric measurement settings—a common situation in real experiments due to calibration imperfections. In \cref{fig:qutrit_steering_two_measurement}, we compare the noise threshold derived from the majorization-based bound \cite{rudnicki14} with that obtained from our optimal bound for the setting $\mathcal{M}_2$. The results show that the tighter entropic bounds translate directly into stronger robustness against white noise, enabling the certification of quantum steering in regimes where analytical or majorization-based bounds fail.

\textbf{Certifiable athermality resource under the restricted measurements.} Consider a finite-dimensional system with Hamiltonian $H$ and Gibbs state $\tau_\beta=e^{-\beta H}/Z_\beta$, where $Z_\beta=\tr[e^{-\beta H}]$ and the inverse temperature is $\beta=(k_B T)^{-1}$, with $k_B$ the Boltzmann constant. For a state $\rho$, the nonequilibrium free-energy difference relative to equilibrium is
\begin{equation}
F(\rho)-F(\tau_\beta)
=
\beta^{-1}D(\rho\|\tau_\beta) \; ,
\end{equation}
with relative entropy $D(\rho\|\tau_\beta)$. This quantity characterizes the maximum amount of work that can be extracted when a macroscopic system is in contact with a heat bath in the thermodynamical limit
\cite{donald87,brandao13,horodecki13}.

In experiments, however, the observer often has access only to a fixed coarse-grained measurement described by a POVM
$\mathcal{E}=\{E_i\}_{i=1}^m$, which maps a quantum state to a classical distribution
\[
\mathcal{E}(\rho)
=
\mathbf{p}_{\mathcal{E}}(\rho) \; ,
\qquad
(\mathbf{p}_{\mathcal{E}}(\rho))_i=\tr[E_i\rho] \; .
\]
Similarly, the equilibrium state $\tau_\beta$ is mapped to $\mathcal{E}(\tau_\beta)$. The athermality certifiable from this restricted measurement is naturally quantified by
\begin{equation}
W_{\mathcal{E}}^{\rm obs}(\rho)
=
\beta^{-1}
D_{\rm KL}
\bigl(
\mathcal{E}(\rho)
\big\|
\mathcal{E}(\tau_\beta)
\bigr) \; ,
\end{equation}
where $D_{\rm KL}(\mathbf{p}\|\mathbf{q})=\sum_i p_i\log(p_i/q_i)$ is Kullback-Leibler divergence. This quantity measures the certifiable athermality resource from the measurement statistics generated by $\mathcal{E}$, and is a direct generalization of the relative entropy of athermality to restricted measurements \cite{schindler25}. This motivates the following optimization problem: given a fixed measurement apparatus $\mathcal{E}$, what is the largest athermality that it can detect? We define
\begin{align}
\label{eq:W_obs_max}
W_{\max}^{\rm obs}(\mathcal{E})
&=
\beta^{-1}
\sup_{\rho}
D_{\rm KL}
\bigl(
\mathcal{E}(\rho)
\big\|
\mathcal{E}(\tau_\beta)
\bigr) \\ 
&= \beta^{-1}\max_{\mathbf{p}\in\mathcal P(\mathcal{E})}
D_{\rm KL}(\mathbf{p}\|\mathcal{E}(\tau_\beta)) \\
&= -\beta^{-1}\min_{\mathbf{z}\in\mathcal Z}-
D_{\rm KL}(\mathbf{p}(\mathbf{z})\|\mathcal{E}(\tau_\beta)) \; .
\end{align}
Since $D_{\rm KL}$ is convex in its first argument, this is exactly the convex-maximization (equivalently, concave-minimization of $-D_{\rm KL}$) problem, and is therefore amenable to our methodology.
In \cref{fig:w_heatmap_beta_eta}, we plot a heatmap of $W_{\max}^{\rm obs}(\mathcal{E})$ versus the inverse temperature $\beta$ and the measurement sharpness $\eta$ for a 5-qubit system under a noisy measurement model (see Supplementary Material for POVM details), with
the Hamiltonian given by an N-spin mixed-field Ising chain,
\begin{equation}
H_N =
-J\sum_{j=1}^{N-1}\sigma_z^{(j)}\sigma_z^{(j+1)}
-g\sum_{j=1}^{N}\sigma_x^{(j)}
-h\sum_{j=1}^{N}\sigma_z^{(j)} .
\end{equation}
Our method efficiently computes the certifiable athermality, revealing how the measurement noise level and the system temperature jointly govern the maximum detectable athermality with restricted measurements.
\begin{figure}[t]
\centering
\includegraphics[width=0.5\textwidth]{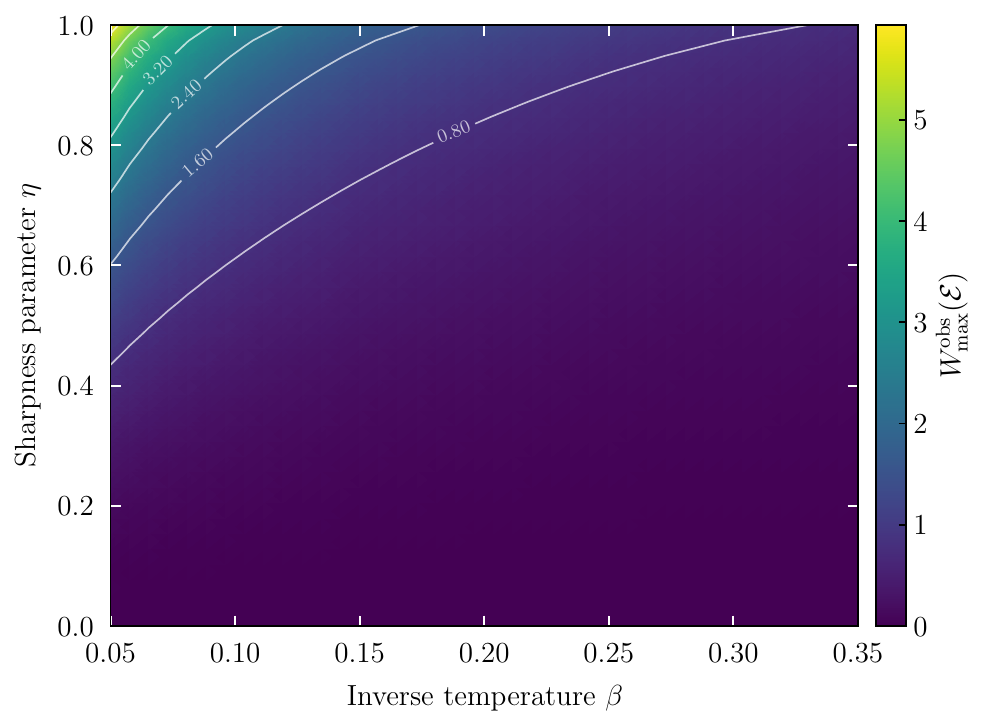}
\caption{Heatmap of the certifiable athermality  $W_{\max}^{\rm obs}(\mathcal{E})$ for a 5-qubit system under a noisy measurement model with Hamiltonian parameters $J=1$, $g=0.7$, and $h=0.2$.}
\label{fig:w_heatmap_beta_eta}
\end{figure}


\emph{Conclusion and outlook---}We have proposed a support-function-based outer-approximation method for solving general concave-minimization (or convex-maximization) problems over the quantum state space. By mapping the problem to a reduced $\mathcal{Z}$-space and describing the feasible region through supporting half-spaces, our approach converts the original optimization over quantum states into an iterative cutting-plane procedure. Each cut is obtained via a spectral computation of an effective observable. Consequently, the algorithm provides certified lower and upper bounds at every step, ensuring convergence to the true optimum within a preassigned precision.

As a direct application, we reformulate variance-based URs as a concave-minimization problem, recovering the results of Ref.~\cite{schwonnek17}. Furthermore, by recasting the computation of tight EURs as the minimization of an entropy functional associated with a single effective POVM over the quantum probability space, our method yields exact EUR lower bounds. These results reveal that standard analytical and majorization-based techniques are often substantially loose and fail to capture the true optimal uncertainty limit. Crucially, our tighter bounds directly translate into stronger steering detection criteria under asymmetric measurement settings. We also applied this framework to quantum thermodynamics, successfully computing the maximum certifiable athermality for a restricted measurement apparatus.

Computationally, the framework is driven by a support-function oracle $\mathbf{u}\mapsto \sigma_{\mathcal{C}}(\mathbf{u})$, which corresponds to finding the largest eigenvalue of an effective observable. As new constraints are iteratively added, the outer polytope converges to the true convex set in the Hausdorff metric, yielding strictly certified and shrinking bounds. Crucially, the algorithm's complexity is governed primarily by the reduced affine dimension $r=\mathrm{rank}(M) \le \min\{m, d^2-1\}$ rather than the underlying Hilbert space dimension $d$, allowing it to efficiently handle high-dimensional quantum systems for a moderate number of measurements $m$. However, we acknowledge that for a large number of measurements, the algorithm encounters a computational bottleneck due to the worst-case exponential scaling of vertex enumeration in high-dimensional polytopes. Nevertheless, any finite truncation of the algorithm still provides rigorous, valid bounds. This limitation reflects the inherent difficulty of extracting complete information from high-dimensional quantum state spaces. To mitigate this, one can apply the method to multiple fixed-size measurement slices. Each low-dimensional projection can be accurately characterized by our algorithm, with different slices providing certified, complementary geometric information about the high-dimensional quantum state space.

From a geometric optimization perspective, many problems in quantum information theory can be formulated as identifying or approximating the boundaries of compact convex sets generated by high-dimensional state spaces. We therefore expect that the methodology developed here can be adapted to a wider class of quantum optimization tasks, providing both a rigorous foundation and a practical computational tool for investigating the geometry of quantum states, characterizing quantum correlations, and certifying quantum thermodynamic resources under restricted measurement scenarios.


\emph{Acknowledgements}---This work was supported in part by the National Natural Science Foundation of China (NSFC) under the Grants 12475087, 12235008, the Fundamental Research Funds for Central Universities, and China Postdoctoral Science Foundation funded project No. 2024M753174.

\emph{Data availability}---
The data that support the findings of this article are openly available at the repository \cite{SourceCode}.

%

\onecolumngrid
\newpage
\beginsupplement   

\begin{center}
\textbf{\large Geometric Optimization over Quantum State Spaces: Tight Uncertainty Relations and Resource Certification
\\ 
\vspace{2ex}
Supplemental Material} 
\end{center}

\section{Equivalent form of entropic uncertainty relations}\label{sec:eur}

A POVM $\mathcal{A}=\{E_{\mu}^A\}_{\mu=1}^m$ is defined as a set of positive semi-definite operators that sum to the identity, i.e., $E_{\mu}^A\geq 0$ and $\sum_{\mu=1}^m E_{\mu}^A=\mathds{1}$. This framework provides the most general description of quantum measurements \cite{nielsen10}. According to Born's rule, the probability distribution of outcomes obtained when measuring a state $\rho$ with POVM $\mathcal{A}$ is:
\begin{align}
p(\mu|\mathcal{A}) = \tr[E_{\mu}^A\rho] \; .
\end{align}
Consider a set of $N$ POVMs, denoted $\mathcal{A}_1, \mathcal{A}_2, \cdots, \mathcal{A}_N$, with $m_1, m_2, \cdots, m_N$ outcomes respectively. We define a unified entropic uncertainty functional as follows:
\begin{align}
\mathcal{U}(\mathcal{A}_1,\mathcal{A}_2,\cdots ;\rho) := H\left(\frac{1}{N}\bigoplus_{i=1}^N\mathbf{p}_{\mathcal{A}_i}\right) \; ,
\label{eq:n_povm_entropy_ur}
\end{align}
where $H(\cdot)$ denotes a generalized entropy function (such as Shannon, Tsallis, or Rényi entropy), and $\oplus$ represents the concatenation (direct sum) of probability vectors. It is readily seen that this definition satisfies all the requirements for an uncertainty measure proposed by Deutsch \cite{deutsch83}. 

Crucially, the entropic uncertainty functional in \cref{eq:n_povm_entropy_ur} can be reinterpreted as the entropy of a \textit{single} effective POVM $\mathcal{E}=\{E_{\mu}\}_{\mu=1}^{m}$ with total outcome number $m=\sum_{i=1}^N m_i$. The elements of $\mathcal{E}$ are constructed by scaling and concatenating the original operators:
\begin{equation}
E_{\mu} = 
\begin{cases}
N^{-1} E_{\mu}^{A_1} \; , & 1 \le \mu \le m_1 \; , \\ 
N^{-1} E_{\mu-m_1}^{A_2} \; , & m_1 < \mu \le m_1+m_2 \; , \\ 
\quad \vdots & \quad \vdots \\
N^{-1} E_{\mu-(m-m_N)}^{A_N} \; , & m-m_N < \mu \le m \; .
\end{cases} 
\end{equation}
By construction, $\sum_{\mu=1}^m E_\mu = \frac{1}{N}\sum_{i=1}^N \left(\sum_{\nu} E_\nu^{A_i}\right) = \mathds{1}$, ensuring that $\mathcal{E}$ is a valid POVM. With this mapping, the uncertainty functional simplifies to:
\begin{align}
\mathcal{U}(\mathcal{A}_1,\mathcal{A}_2,\cdots ;\rho) = H(\mathbf{p}_{\mathcal{E}}(\rho)) \; .
\end{align}
Consequently, the problem of deriving the entropic uncertainty relation (EUR) for multiple POVMs is equivalent to minimizing the entropy of this single effective POVM $\mathcal{E}$. We define the minimal entropy as:
\begin{align}
h(\mathcal{E}) := \inf_{\rho}H(\mathbf{p}_{\mathcal{E}}(\rho)) \; .
\end{align}
Based on this reduction, we can formulate EURs corresponding to $\alpha$-order Tsallis entropy:
\begin{align} 
\sum_{i=1}^{N}H_{\alpha}^T\left(\mathbf{p}_{\mathcal{A}_i}(\rho)\right) &\geq q_{\alpha}^{T}(\mathcal{A}_1,\cdots,\mathcal{A}_N) \; , \label{eq:tsallis_eur}
\end{align}
where the bound is given by:
\begin{align}
&q_{\alpha}^{T}(\mathcal{A}_1,\cdots,\mathcal{A}_N) = N^{\alpha}h_{\alpha}^{T}(\mathcal{E}) - \frac{N-N^{\alpha}}{1-\alpha} \; .
\end{align}
Here, $H_{\alpha}^T(\mathbf{p})=\frac{1}{1-\alpha}(\sum_i p_i^\alpha - 1)$ is the Tsallis entropy \cite{tsallis88} with $\alpha\in (0,1)\cup(1,\infty)$ and $h_{\alpha}^{T}(\mathcal{E})=\min_\rho H_{\alpha}^T(\mathbf{p}_{\mathcal{E}}(\rho))$. 
\begin{proof} 
Consider $N$ probability vectors $\mathbf{p}_1, \cdots, \mathbf{p}_N$, and let $\mathbf{q}$ be the effective probability distribution constructed by concatenating and scaling these vectors:
\begin{equation}
\mathbf{q} = \frac{1}{N} \bigoplus_{i=1}^N \mathbf{p}_i \; .
\end{equation}
The components of $\mathbf{q}$ are given by $\{ p_{i,j}/N \}_{i,j}$, where $j$ runs over the outcomes of the $i$-th POVM. Substituting this into the definition of Tsallis entropy, we proceed as follows:
\begin{align}
H_{\alpha}^T(\mathbf{q}) &= \frac{1}{1-\alpha}\left[\sum_{i=1}^N \sum_{j} \left(\frac{p_{i,j}}{N}\right)^\alpha - 1\right] \notag \\ 
&= \frac{1}{1-\alpha}\left[\frac{1}{N^{\alpha}}\sum_{i=1}^N \left(\sum_{j} p_{i,j}^{\alpha}\right) - 1\right] \notag \\ 
&= \frac{N^{-\alpha}}{1-\alpha}\left[\sum_{i=1}^N \sum_{j} p_{i,j}^{\alpha} - N^{\alpha}\right] \; .
\end{align}
To recover the sum of individual entropies, we rewrite the term inside the bracket by adding and subtracting $N$:
\begin{align}
H_{\alpha}^T(\mathbf{q}) &= \frac{N^{-\alpha}}{1-\alpha}\left[\left(\sum_{i=1}^N \sum_{j} p_{i,j}^{\alpha} - N\right) + \left(N - N^{\alpha}\right)\right] \notag \\ 
&= N^{-\alpha} \sum_{i=1}^N \underbrace{\frac{\sum_{j} p_{i,j}^{\alpha} - 1}{1-\alpha}}_{H_{\alpha}^T(\mathbf{p}_i)} + \frac{N^{-\alpha}(N - N^{\alpha})}{1-\alpha} \notag \\ 
&= N^{-\alpha} \sum_{i=1}^N H_{\alpha}^T(\mathbf{p}_i) + \frac{N^{1-\alpha} - 1}{1-\alpha} \; .
\end{align}
Finally, multiplying both sides by $N^{\alpha}$ and rearranging the terms yields the relation used in the derivation of the EUR:
\begin{align}
\sum_{i=1}^N H_{\alpha}^T(\mathbf{p}_i) = N^{\alpha} H_{\alpha}^T(\mathbf{q}) - \frac{N - N^{\alpha}}{1-\alpha} \; .
\end{align}
This confirms that minimizing $H_{\alpha}^T(\mathbf{q})$ is equivalent to minimizing the sum of the individual entropies, up to constant shift and scaling factors. Let $h_{\alpha}^{T}(\mathcal{E})=\min_\rho H_{\alpha}^T(\mathbf{p}_{\mathcal{E}}(\rho))$, we have Tsallis EURs
\begin{align}
\sum_{i=1}^N H_{\alpha}^T(\mathbf{p}_i) \geq N^{\alpha} h_{\alpha}^{T}(\mathcal{E}) - \frac{N - N^{\alpha}}{1-\alpha} \; .
\end{align}
\end{proof}

The Rényi entropy $H^R_{\alpha}(\mathbf{p})=\frac{1}{1-\alpha}\ln\sum_i p_i^\alpha$ is slightly different, since it is Schur concave but not concave over the entire parameter range $\alpha\in (0,1)\cup(1,\infty)$. Specifically, it is concave only for $\alpha\in (0,1)$, but not for $\alpha>1$. Nevertheless, we note that it is related to the Tsallis entropy through
\begin{align}
H_{\alpha}^R = \frac{1}{1-\alpha}\ln \left[1+(1-\alpha)H_{\alpha}^T\right] \; .
\end{align}
Since $\frac{d H_{\alpha}^R}{d H_{\alpha}^T}=\frac{1}{P_\alpha}>0$ with $P_\alpha=\sum_{i}p_i^\alpha$, the Rényi entropy is a strictly monotonically increasing function of the Tsallis entropy. Consequently, the minimal Rényi entropy is completely determined by the minimal Tsallis entropy,
\begin{align}
h_{\alpha}^{R}(\mathcal{E}) = \min_\rho H_{\alpha}^R(\mathbf{p}_{\mathcal{E}}(\rho)) = \frac{1}{1-\alpha}\ln \left[1+(1-\alpha)h_{\alpha}^{T}(\mathcal{E})\right] \; ,
\end{align}
which yields the EUR for the $\alpha$-order Rényi entropy:
\begin{align} 
H_{\alpha}^R\left(\frac{1}{N}\bigoplus_{i=1}^{N}\mathbf{p}_{\mathcal{A}_i}(\rho)\right) &\geq q_{\alpha}^{R}(\mathcal{A}_1,\cdots,\mathcal{A}_N) \; , \label{eq:renyi_eur}  
\end{align}
with
\begin{align}
q_{\alpha}^{R}(\mathcal{A}_1,\cdots,\mathcal{A}_N) = \frac{1}{1-\alpha}\ln \left[1+(1-\alpha)h_{\alpha}^{T}(\mathcal{E})\right] \; .
\end{align}
It is worth noting that, for the Rényi entropy, the term $H^R_{\alpha}((\mathbf{p}\oplus\mathbf{q})/2)$ generally does not decompose into a sum of the individual entropies $H^R_{\alpha}(\mathbf{p})$ and $H^R_{\alpha}(\mathbf{q})$. Therefore, \cref{eq:renyi_eur} defines a distinct type of entropic uncertainty relation. In the limit $\alpha\to 1$, both families recover the Shannon EUR:
\begin{align}
\sum_{i=1}^{N}H\left(\mathbf{p}_{\mathcal{A}_i}(\rho)\right) &\geq Nh(\mathcal{E}) - N\ln N \; .
\end{align}

\section{Certifiable athermality under restricted measurements}

We consider an $N$-spin mixed-field Ising chain with Hamiltonian 
\begin{equation}
H_N = -J\sum_{j=1}^{N-1}\sigma_z^{(j)}\sigma_z^{(j+1)}
-g\sum_{j=1}^{N}\sigma_x^{(j)}
-h\sum_{j=1}^{N}\sigma_z^{(j)} \; .
\end{equation}
To construct a coarse-grained measurement model for detecting athermality, we consider the following three observables
\begin{align}
G_1 &= \frac{1}{N}\sum_{j=1}^{N}\sigma_j^z, \\
G_2 &= \frac{1}{N}\sum_{j=1}^{N}\sigma_j^x, \\
G_3 &= \frac{1}{N-1}\sum_{j=1}^{N-1}\sigma_j^z\sigma_{j+1}^z \; ,
\end{align}
where $G_3$ captures the nearest-neighbour interaction term of the Hamiltonian. The three observables span a three-dimensional vector space, in which we place four measurement directions at the vertices of a regular tetrahedron,
\begin{equation}
\mathbf{u}_0 = \tfrac{1}{\sqrt{3}}(1,1,1) \; ,\quad
\mathbf{u}_1 = \tfrac{1}{\sqrt{3}}(1,-1,-1) \; ,\quad
\mathbf{u}_2 = \tfrac{1}{\sqrt{3}}(-1,1,-1) \; ,\quad
\mathbf{u}_3 = \tfrac{1}{\sqrt{3}}(-1,-1,1) \; ,
\end{equation} 
which are pairwise separated by the maximal angle
$\arccos(-\tfrac{1}{3})$ and satisfy
$\sum_{k=0}^{3}\mathbf{u}_k = \mathbf{0}$. For each direction we form the Hermitian combination
\begin{equation}
C_k = \sum_{m=1}^{3} u_{k,m}\,G_m \; , \; k = 0,1,2,3 \; .
\end{equation}
The four-outcome POVM is then defined by
\begin{equation}
E_k = \frac{1}{4}\left[\mathds{1} + \eta\, c_{\max}\, C_k\right] \; , \;  k = 0,1,2,3 \; ,
\end{equation}
where $\eta \in [0,1]$ is a sharpness parameter and $c_{\max}$ is the largest scaling factor compatible with positivity. Since each element must satisfy $E_k \geq 0$, i.e.
$1 + c\,\lambda_{\min}(C_k) \ge 0$, the saturating factor is
\begin{equation}
c_{\max} = \min_{k:\,\lambda_{\min}(C_k)<0}
\left( -\frac{1}{\lambda_{\min}(C_k)} \right) \; .
\end{equation}
Using $\sum_{k=0}^{3}\mathbf{u}_k = \mathbf{0}$, hence
$\sum_k C_k = 0$, the completeness relation holds exactly $\sum_{k=0}^{3} E_k
= \frac{1}{4}\left[4\,\mathds{1}
+ \eta\, c_{\max}\sum_{k=0}^{3} C_k\right] = \mathds{1}$ and positivity $E_k\geq0$ is guaranteed for all $\eta \in [0,1]$ by the definition of $c_{\max}$.

\section{Optimization equivalence between MIBS and QPS}

In the main text, we consider general optimization problems of the form
\begin{align}
f(\mathcal{A})
=\inf_{\rho}
F\bigl(
\braket{A_1}_{\rho},\cdots,\braket{A_m}_{\rho}
\bigr) \; ,
\label{sm_eq:general_opt_intro}
\end{align}
where $\mathcal{A}=\{A_1,\cdots,A_m\}$ is a finite set of Hermitian operators, $F$ is a real-valued continuous functional, and the optimization is taken over all density operators $\rho$ on the underlying Hilbert space. This problem can be equivalently reformulated in terms of the measurement-induced Bloch space (MIBS) associated with $\mathcal{A}$,
\begin{align}
\mathcal{C}(\mathcal{A})
:=
\left\{
\boldsymbol{\alpha}_{\mathcal{A}}(\rho)
\;\middle|\;
\rho\in\mathcal{D}(\mathcal{H})
\right\} \subseteq \mathbb{R}^m ,
\end{align}
where
\begin{align}
\boldsymbol{\alpha}_{\mathcal{A}}(\rho)
=
\bigl(
\braket{A_1}_{\rho},
\cdots,
\braket{A_m}_{\rho}
\bigr)^{\top}
\end{align}
is the vector of expectation values of the observables in $\mathcal{A}$. Hence,
\begin{align}
f(\mathcal{A})
=
\inf_{\boldsymbol{\alpha}\in\mathcal{C}(\mathcal{A})}
F(\boldsymbol{\alpha}) \; .
\end{align}

Quantum theory is intrinsically probabilistic and ultimately assigns probabilities to measurement outcomes. It is therefore natural to ask whether the expectation-value space generated by an arbitrary set of Hermitian observables can be represented, without loss of information, as a probability space generated by a suitable POVM. The following lemma answers this question in the affirmative: every MIBS is affinely isomorphic to the quantum probability space (QPS) induced by an explicitly constructed POVM.

\begin{lemma}[Affine isomorphism between MIBS and QPS]
\label{lem:affine_mapping}
Let $\mathcal{A}=\{A_i\}_{i=1}^{m}$ be a finite set of Hermitian operators, and let
\begin{align}
\mathcal{C}(\mathcal{A})
=
\left\{
\bigl(\alpha_1,\cdots,\alpha_m\bigr)^{\top}
\;\middle|\;
\alpha_i=\braket{A_i}_{\rho},\ 
\rho\in\mathcal{D}(\mathcal{H})
\right\}
\end{align}
be the associated MIBS. For each $i$, define the positive semidefinite operator
\begin{align}
B_i := A_i-\lambda_{\min}(A_i)\mathds{1} \geq 0 ,
\end{align}
and let
\begin{align}
\kappa
:=
\left\|
\sum_{i=1}^{m} B_i
\right\|_{\infty} .
\end{align}
Assume that $\kappa\neq0$. Then the operators
\begin{align}
E_i := \frac{1}{\kappa}B_i
=
\frac{1}{\kappa}
\bigl(
A_i-\lambda_{\min}(A_i)\mathds{1}
\bigr),
\qquad i=1,\cdots,m,
\end{align}
together with
\begin{align}
E_0 := \mathds{1}-\sum_{i=1}^{m}E_i ,
\end{align}
form a POVM $\mathcal{E}=\{E_0,E_1,\cdots,E_m\}$. Let
\begin{align}
\mathcal{P}(\mathcal{E})
=
\left\{
\bigl(p_1,\cdots,p_m\bigr)^{\top}
\;\middle|\;
p_i=\braket{E_i}_{\rho},\ 
\rho\in\mathcal{D}(\mathcal{H})
\right\}
\subseteq \mathbb{R}^{m}
\end{align}
be the corresponding quantum probability space (QPS), where the coordinate associated with $E_0$ is omitted since it is fixed by normalization, $p_0=1-\sum_{i=1}^{m}p_i$. Then the map
\begin{align}
\Phi:\mathcal{C}(\mathcal{A})\longrightarrow \mathcal{P}(\mathcal{E}),
\qquad
\Phi_i(\boldsymbol{\alpha})
=
\frac{\alpha_i-\lambda_{\min}(A_i)}{\kappa},
\quad i=1,\cdots,m,
\label{sm_eq:affine_map}
\end{align}
is an affine isomorphism, with inverse
\begin{align}
\Phi^{-1}_i(\mathbf{p})
=
\kappa\,p_i+\lambda_{\min}(A_i),
\quad i=1,\cdots,m .
\end{align}
Consequently, any optimization problem over $\mathcal{C}(\mathcal{A})$ can be equivalently transformed into an optimization problem over the QPS $\mathcal{P}(\mathcal{E})$.
\end{lemma}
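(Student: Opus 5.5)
The proof splits into two parts: first show that $\mathcal{E}$ is a POVM, then show that $\Phi$ is a well-defined affine bijection between the two sets. Both parts rest on the fact that each $E_i$ is an affine function of $A_i$ with a nonzero scalar coefficient.

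\textbf{POVM property.} For each $i$, the spectral theorem gives $A_i\ge\lambda_{\min}(A_i)\mathds{1}$, so $B_i\ge0$. Hence $E_i=B_i/\kappa\ge0$ whenever $\kappa>0$. Note that $\kappa\ge0$ because $\sum_i B_i\ge0$, and $\kappa\neq0$ by hypothesis. Completeness $\sum_{i=0}^m E_i=\mathds{1}$ holds by the definition of $E_0$. The only nontrivial check is $E_0\ge0$. For this I will use the fact that for a positive semidefinite operator $S=\sum_i B_i$, its operator norm equals its largest eigenvalue, so $S\le\|S\|_\infty\mathds{1}=\kappa\mathds{1}$. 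Dividing by $\kappa$ gives $\sum_{i\ge1}E_i\le\mathds{1}$, that is, $E_0\ge0$. This is the step that needs care: it uses positivity of $S$, so that $\|S\|_\infty=\lambda_{\max}(S)$, and it explains why $\kappa$ was chosen as this particular normalisation.

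\textbf{The map $\Phi$.} For any state $\rho$, linearity of the trace together with $\tr[\rho]=1$ gives
\begin{align}
\braket{E_i}_\rho=\frac{\tr[A_i\rho]-\lambda_{\min}(A_i)}{\kappa}=\Phi_i\bigl(\boldsymbol{\alpha}_{\mathcal{A}}(\rho)\bigr) .
\end{align}
So $\Phi\circ\boldsymbol{\alpha}_{\mathcal{A}}=\mathbf{p}_{\mathcal{E}}$ as maps on $\mathcal{D}(\mathcal{H})$, where $\mathbf{p}_{\mathcal{E}}$ denotes the truncated vector $(p_1,\dots,p_m)$. It follows immediately that $\Phi(\mathcal{C}(\mathcal{A}))=\mathcal{P}(\mathcal{E})$, because both sets are images of the same state space, so $\Phi$ is well defined and surjective onto $\mathcal{P}(\mathcal{E})$.

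Next, $\Phi$ is the restriction of the affine map $\mathbb{R}^m\to\mathbb{R}^m$ given by $\boldsymbol{\alpha}\mapsto\kappa^{-1}(\boldsymbol{\alpha}-\boldsymbol{\lambda})$, with $\boldsymbol{\lambda}=(\lambda_{\min}(A_i))_i$. Its linear part is $\kappa^{-1}\mathds{1}_m$, which is invertible, so $\Phi$ is injective. A direct composition check shows that the stated formula $\mathbf{p}\mapsto\kappa\mathbf{p}+\boldsymbol{\lambda}$ is a two-sided inverse. Since this inverse is also affine, $\Phi$ is an affine isomorphism between the two convex sets; in particular it maps convex combinations to convex combinations, extreme points to extreme points, and boundaries to boundaries.

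\textbf{Consequence for optimization.} Set $\tilde F:=F\circ\Phi^{-1}$. Then
\begin{align}
\inf_{\boldsymbol{\alpha}\in\mathcal{C}(\mathcal{A})}F(\boldsymbol{\alpha})=\inf_{\mathbf{p}\in\mathcal{P}(\mathcal{E})}\tilde F(\mathbf{p}) .
\end{align}
Moreover, $\tilde F$ inherits continuity and concavity (or convexity) from $F$, because composing with an affine map preserves these properties. So the transformed problem is again a concave-minimization (or convex-maximization) problem of the same type.
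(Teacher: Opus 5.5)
Your proposal is correct and follows essentially the same route as the paper. You get $E_0\ge0$ from $\sum_i B_i\le\lambda_{\max}\bigl(\sum_i B_i\bigr)\mathds{1}=\kappa\mathds{1}$, show $\Phi\circ\boldsymbol{\alpha}_{\mathcal{A}}=\mathbf{p}_{\mathcal{E}}$ on the common state space, and use invertibility of $\boldsymbol{\alpha}\mapsto\kappa^{-1}(\boldsymbol{\alpha}-\boldsymbol{\lambda})$ on $\mathbb{R}^m$; your extra remark that $\tilde F=F\circ\Phi^{-1}$ preserves concavity is a small supplement the paper leaves implicit, while the paper also comments on the excluded case $\kappa=0$.
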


\begin{proof}
\emph{Validity of the POVM.}
For each $i$, the operator $B_i=A_i-\lambda_{\min}(A_i)\mathds{1}$ is positive semidefinite by construction, so $E_i=B_i/\kappa\geq 0$ for $i=1,\cdots,m$. Since
\begin{align}
\sum_{i=1}^{m}E_i
=
\frac{1}{\kappa}\sum_{i=1}^{m}B_i ,
\end{align}
and $\kappa=\bigl\|\sum_i B_i\bigr\|_{\infty}$ equals the largest eigenvalue of the positive semidefinite operator $\sum_i B_i$, the operator $\sum_i E_i$ has largest eigenvalue equal to $1$. Hence
\begin{align}
0\leq \sum_{i=1}^{m}E_i \leq \mathds{1},
\end{align}
which guarantees that $E_0=\mathds{1}-\sum_{i=1}^{m}E_i\geq 0$. Together with $\sum_{i=0}^{m}E_i=\mathds{1}$, this shows that $\mathcal{E}=\{E_0,E_1,\cdots,E_m\}$ is a valid POVM.

\emph{Affine isomorphism.}
The map $\Phi$ in \cref{sm_eq:affine_map} is an invertible affine transformation of $\mathbb{R}^m$, consisting of a uniform scaling by $1/\kappa$ together with a translation by $-\lambda_{\min}(A_i)/\kappa$ in each coordinate. It therefore suffices to verify that $\Phi$ maps $\mathcal{C}(\mathcal{A})$ exactly onto $\mathcal{P}(\mathcal{E})$. For any state $\rho$ with $\boldsymbol{\alpha}=\boldsymbol{\alpha}_{\mathcal{A}}(\rho)$, the induced outcome probabilities satisfy
\begin{align}
p_i
=
\braket{E_i}_{\rho}
=
\frac{\braket{A_i}_{\rho}-\lambda_{\min}(A_i)}{\kappa}
=
\frac{\alpha_i-\lambda_{\min}(A_i)}{\kappa}
=
\Phi_i(\boldsymbol{\alpha}).
\end{align}
Thus $\Phi\bigl(\mathcal{C}(\mathcal{A})\bigr)\subseteq\mathcal{P}(\mathcal{E})$, and since both spaces are generated by the \emph{same} set of states $\rho\in\mathcal{D}(\mathcal{H})$, equality holds. Conversely, the inverse map $\Phi^{-1}_i(\mathbf{p})=\kappa p_i+\lambda_{\min}(A_i)$ recovers $\alpha_i=\braket{A_i}_{\rho}$ from $p_i=\braket{E_i}_{\rho}$. As both $\Phi$ and $\Phi^{-1}$ are affine bijections, $\Phi$ is an affine isomorphism between $\mathcal{C}(\mathcal{A})$ and $\mathcal{P}(\mathcal{E})$.

\emph{Degenerate case.}
If $\kappa=0$, then $\sum_i B_i=0$, and since each $B_i\geq 0$ this forces $B_i=0$ for all $i$, i.e.\ every $A_i$ is proportional to the identity. In this case $\mathcal{C}(\mathcal{A})$ reduces to a single point and the optimization is trivial. Hence the construction above covers all nontrivial cases.
\end{proof}

This affine equivalence has an immediate consequence for the optimization problem. Define the transformed functional on the QPS by
\begin{align}
\widetilde{F}(\mathbf{p})
:=
F\bigl(
\kappa p_1+\lambda_{\min}(A_1),
\cdots,
\kappa p_m+\lambda_{\min}(A_m)
\bigr)
=
F\bigl(\Phi^{-1}(\mathbf{p})\bigr).
\end{align}
Then the original problem can be rewritten as
\begin{align}
f(\mathcal{A})
=
\inf_{\boldsymbol{\alpha}\in\mathcal{C}(\mathcal{A})}
F(\boldsymbol{\alpha})
=
\inf_{\mathbf{p}\in\mathcal{P}(\mathcal{E})}
\widetilde{F}(\mathbf{p}) .
\label{sm_eq:mibs_qps_equivalence}
\end{align}
Therefore, optimizing a continuous functional of expectation values over the MIBS is equivalent to optimizing the corresponding transformed functional over a quantum probability space generated by a POVM.

This observation provides the conceptual bridge between the general observable-based formulation and the POVM-based probability-space formulation adopted. Although the optimization problems of interest are most naturally stated in terms of arbitrary Hermitian observables, \cref{lem:affine_mapping} shows that, up to an invertible affine transformation, no generality is lost by working with POVM-induced probability spaces.

\end{document}